\newtheorem{theorem}{Theorem}[section]
\newcommand*{\defeq}{\mathrel{\rlap{%
                     \raisebox{0.3ex}{$\m@th\cdot$}}%
                     \raisebox{-0.3ex}{$\m@th\cdot$}}%
                     =}
\def\ii{{\rm i}}
\def\sx{\sigma^{\rm x}}
\def\sy{\sigma^{\rm y}}
\def\sz{\sigma^{\rm z}}
\def\tr#1{{\rm tr}(#1)}
\def\1{\mathbbm{1}}
\def\cL{{\cal L}}
\def\cLd{{\cal L}_{\rm dis}}
\def\cLdd{{\cal L}_{\rm dis}^{(\rm d)}}
\def\ket#1{{| #1 \rangle}}
\def\tit#1{{\em #1},}
\def\etal#1{#1}
\begin{document}

\title{Geometry of local quantum dissipation and fundamental limits to local cooling}

\author{Marko \v Znidari\v c}
\affiliation{Physics Department, Faculty of Mathematics and Physics, University of Ljubljana, Ljubljana, Slovenia}

\date{\today}

\begin{abstract}
We geometrically characterize one-qubit dissipators of a Lindblad type. An efficient parametrization in terms of $6$ linear parameters opens the way to various optimizations with local dissipation. As an example, we study maximal steady-state singlet fraction that can be achieved with an arbitrary local dissipation and two qubit Hamiltonian. We show that this singlet fraction has a discontinuity as one moves from unital to non-unital dissipators and demonstrate that the largest possible singlet fraction is $\approx 0.654$. This means that for systems with a sufficiently entangled ground state there is a fundamental quantum limit to the lowest attainable energy. With local dissipation one is unable to cool the system below some limiting non-zero temperature.
\end{abstract}

\pacs{03.65.Yz, 03.65.Aa, 03.67.Bg}



\maketitle

\section{Introduction}

Quantum information theory aims at finding protocols that outperform best classical ones~\cite{Nielsen}. This can be achieved either on a case-by-case basis, studying specific problems and finding efficient algorithms, or, one can try to characterize the power of a certain generic resource, like e.g. entanglement, local interaction etc.. The latter method can be very powerful, however, if the setting considered is too broad, the results can be, from a practical point of view, less than optimal. In this work we study capabilities of an important fundamental resource, namely that of local dissipation.

The most general transformations of quantum states are linear completely positive trace-preserving maps (CPM) also called quantum channels~\cite{Zyczkowski}. A simpler subset consists of transformations that are solutions of local Markovian master equations of the Lindblad type~\cite{GKS,Lindblad}. Lindblad equations, whose generic properties have been studied in e.g.~\cite{Baumgartner:08,Ticozzi:08}, are increasingly recognized as a useful resource, be it for universal computation~\cite{Verstraete:09}, steady-state manipulation~\cite{Zanardi:14}, as an experimental setting, or as a tool for nonequilibrium physics. Characterizing the set of steady states reachable with given resources is a formidable problem with results being usually limited to special cases. For pure steady states simple conditions determine their stationarity~\cite{Zanardi:98,Yamamoto:05,Kraus:08}, while much less is known about mixed steady states, see though Ref.~\cite{Sauer:13}. Frequently an important constraint is locality of interaction, either because it is least costly to implement, or because it naturally arises in weakly coupled systems. Its role on pure steady states has been studied in Ref.~\cite{Ticozzi:12}, and local translational conservation laws in~\cite{JMP:14}. 

In this work we study local one-qubit Lindblad dissipation. First, we efficiently characterize it, leading to a simple geometrical picture akin to the celebrated tetrahedron geometry of qubit channels~\cite{Ruskai:02}. We then demonstrate its usefulness by studying the set of states reachable by arbitrary two qubit Hamiltonian and one-qubit dissipation. We in particular show that the overlap of the steady state with a maximally entangled state, i.e. the singlet fraction, is upper bounded by $(3+\sqrt{5})/8\approx 0.654$. This result sheds light on the influence of local dissipation on non-local quantum properties. An important consequence is that, provided the system's ground state is sufficiently entangled, local dissipation can not cool the system down to low temperatures. There is a ``temperature gap'' below which one can not go. For the particular setting studied this improves on the 3rd law of thermodynamics and various generic zero-temperature unattainability results~\cite{Brumer:13,Masanes:14} or, on ground-state cooling limitations in specific situations, e.g.~\cite{GScooling}. We show that not only is the zero temperature unattainable, but also that finite (low) non-zero temperatures can not be reached with local dissipation.

\section{Local dissipation}

The Lindblad equation is~\cite{GKS,Lindblad}
\begin{eqnarray}
\frac{{\rm d}\rho}{{\rm d}t}&=&{\cal L}(\rho)=\ii[\rho,H]+\cLd(\rho),
\label{eq:Lin}
\end{eqnarray}
where $\cLd(\rho)=\sum_j 2L_j \rho L_j^\dagger-\rho L_j^\dagger L_j-L_j^\dagger L_j \rho$ is a superoperator called a dissipator that depends on a set of traceless Lindblad operators $L_j$. The Lindblad equation, that generates a CPM map via $\Lambda_\tau = {\rm e}^{\cL\,\tau}$, can also be written in a non-diagonal form with a dissipator $\cLd^{(\rm nd)}(\rho)=\sum_{j,k} g_{j,k} (2\tilde{L}_j\rho \tilde{L}_k^\dagger-\rho \tilde{L}_j^\dagger \tilde{L}_k-\tilde{L}_j^\dagger \tilde{L}_k \rho)$, where $\tilde{L}_j$ is a complete set of orthogonal traceless operators. Matrix $g_{j,k}$ is called a Gorini-Kossakowski-Sudarshan (GKS) matrix~\cite{GKS} and has to satisfy $g \ge 0$. 

Expanding Hermitian density operator $\rho$ on $n$ qubits in an orthogonal Hermitian traceless operator basis $F_j$, $\rho=\frac{1}{2^n}\mathbbm{1}+\sum_j c_j F_j$, so that $\rho$ is parametrized by real coherence vector $\mathbf{c}$, dissipator $\cLd(\rho)=\sum_j c_j' F_j$ induces an affine map $\mathbf{c}'=\mathbf{M}\mathbf{c}+\mathbf{\tilde{t}}$, with real matrix $\mathbf{M}$ and real vector $\mathbf{\tilde{t}}$. The unitary part $\ii [\rho,H]$ of the Lindblad equation induces map $\mathbf{c'}=\mathbf{N}\mathbf{c}$, with $\mathbf{N}$ being real antisymmetric. We shall discuss generators $\cLd$ instead of induced channels $\Lambda_t$ because of simpler relations and because they form a convex set while Lindblad channels $\Lambda_\tau$ do not, e.g.~\cite{Wolf:08}.

We are interested in one-qubit dissipators, for which $\mathbf{M}$ is always symmetric~\cite{Lendi:87}, and therefore diagonalizable. Therefore, in an appropriate basis $\cLd$ can be written in a canonical diagonal form (basis $\{\sx,\sy,\sz,\1\}$),
\begin{equation}
\cLdd=
\left( \begin{array}{cccc}
-\frac{q_2+q_3}{2} & 0 & 0 & t_1 \\
0 & -\frac{q_1+q_3}{2} & 0 & t_2 \\
0 & 0 & -\frac{q_1+q_2}{2} & t_3 \\
0 & 0 & 0 & 0\\
\end{array} \right).
\label{eq:Ldiag}
\end{equation}
In its canonical form $\cLdd$ a Lindblad dissipator can be parametrized by $6$ real parameters, $q_j$ and $\mathbf{t}=(t_1,t_2,t_3)$. The central question is, for which values of these parameters (\ref{eq:Ldiag}) is the resulting $\cLdd$ of a Lindblad form $\cLd$ (\ref{eq:Lin}), ie., generates a dynamical semigroup? It is important to note -- and this is the main advantage of the parametrization in Eq.~(\ref{eq:Ldiag}) -- that $\cLd$ is {\em linear} in these parameters, while on the other hand it is {\em quadratic} in Lindblad operators $L_j$. Having linear parametrization of $\cLd$ will greatly simplify all optimization problems, including our application later on.

We write the canonical dissipator $\cLdd$ (\ref{eq:Ldiag}) in terms of a GKS matrix (written in the basis $\{\sx,\sy,\sz\}$)
\begin{equation}
g=
\frac{1}{8}\left( \begin{array}{cccc}
q_1 & -\ii t_3 & \ii t_2 \\
\ii t_3 & q_2 & -\ii t_1 \\
-\ii t_2 & \ii t_1 & q_3 \\
\end{array} \right).
\label{eq:g}
\end{equation}
Sufficient and necessary condition for $\cLdd$ to be of a Lindblad form is that $g \ge 0$. If the dissipator is unital, $\mathbf{t}=0$, the condition $g\ge 0$ translates to three simple conditions $q_{1,2,3}\ge 0$. We are now going to show that for non-unital case a single additional condition is necessary.
\begin{theorem}
One-qubit dissipator in the canonical form given by Eq.(\ref{eq:Ldiag}) represents a Lindblad dissipator iff 
\begin{eqnarray}
\label{eq:unit}
q_i &\ge& 0,\\
1 &\ge& \frac{t_1^2}{q_2 q_3}+\frac{t_2^2}{q_1 q_3}+\frac{t_3^2}{q_1 q_2}.
\label{eq:con}
\end{eqnarray}
(If any of the denominators in the last equation is zero, it must be understood that the corresponding numerator $t_j$ must also be zero, and the term is left out on the RHS.)
\label{thm1}
\end{theorem}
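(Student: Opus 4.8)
The plan is to reduce the whole statement to the positive semidefiniteness of the $3\times 3$ Hermitian GKS matrix $g$ of Eq.~(\ref{eq:g}), which by the sentence immediately preceding the theorem is necessary and sufficient for $\cLdd$ to be of Lindblad form. Since $g$ is Hermitian, I would then invoke the standard criterion that a Hermitian matrix is positive semidefinite if and only if \emph{all} of its principal minors are nonnegative (here the leading principal minors alone do not suffice, as they characterize only strict definiteness, so all seven minors must be checked).

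First I would read off the three $1\times 1$ principal minors, namely $q_1/8$, $q_2/8$, $q_3/8$, which give Eq.~(\ref{eq:unit}). Next I would compute the three $2\times 2$ principal minors: deleting row and column $3$, then $2$, then $1$ yields, up to the positive factor $1/64$, the quantities $q_1q_2-t_3^2$, $q_1q_3-t_2^2$, $q_2q_3-t_1^2$, so that nonnegativity demands $q_iq_j\ge t_k^2$ for $\{i,j,k\}=\{1,2,3\}$. Finally I would expand the full $3\times 3$ determinant of $g$; the purely imaginary cross terms must cancel in pairs (as they must for any Hermitian matrix), leaving $\det g=(q_1q_2q_3-q_1t_1^2-q_2t_2^2-q_3t_3^2)/512$, so that the last minor condition is $q_1q_2q_3\ge q_1t_1^2+q_2t_2^2+q_3t_3^2$.

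It then remains to show that this list of conditions collapses to Eqs.~(\ref{eq:unit})--(\ref{eq:con}). When all $q_i>0$, dividing the determinant inequality by $q_1q_2q_3$ produces exactly Eq.~(\ref{eq:con}); conversely, Eq.~(\ref{eq:con}) together with $q_i\ge 0$ implies each $2\times 2$ inequality, since every summand on the right-hand side of Eq.~(\ref{eq:con}) is individually bounded by $1$, so the $2\times 2$ minors are redundant in the nondegenerate case. The delicate part of the argument—and the step I expect to be the main obstacle—is the degenerate case in which one or more of the $q_i$ vanish: if, say, $q_1=0$, then the $2\times 2$ minors force $t_2=t_3=0$, which is precisely the convention stated in parentheses after Eq.~(\ref{eq:con}), while the remaining content of the $2\times 2$ and $3\times 3$ minors reduces to $q_2q_3\ge t_1^2$, again exactly what Eq.~(\ref{eq:con}) asserts after the vanishing-denominator terms are dropped. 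I would therefore finish by enumerating the cases (one, two, or three of the $q_i$ equal to zero) and verifying in each that the principal-minor conditions are reproduced verbatim by Eqs.~(\ref{eq:unit})--(\ref{eq:con}) read with the stated convention, which establishes the equivalence in both directions.
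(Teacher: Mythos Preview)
Your argument is correct, but it follows a genuinely different route from the paper's. The paper writes down the characteristic polynomial $p(\lambda)$ of $g$ and applies Descartes' rule of signs: since $g$ is Hermitian all roots are real, so the sign pattern of the coefficients $C=\sum q_i$, $A-t^2=\sum q_iq_j-\sum t_k^2$, and $Q-B=q_1q_2q_3-\sum q_it_i^2$ determines exactly how many eigenvalues are nonnegative. From this the paper extracts the three conditions $C>0$, $A\ge t^2$, $Q\ge B$, and then shows in a second step that $q_i\ge 0$ together with $Q\ge B$ already forces $A\ge t^2$, so the middle condition is redundant. You instead invoke the standard principal-minors criterion for positive semidefiniteness of a Hermitian matrix, obtaining directly the seven inequalities $q_i\ge 0$, $q_iq_j\ge t_k^2$, and $Q\ge B$; the redundancy you then eliminate is that of the three $2\times 2$ minors, which in the nondegenerate case follow from Eq.~(\ref{eq:con}) term by term. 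Note that the paper's auxiliary inequality $A\ge t^2$ is precisely the \emph{sum} of your three $2\times 2$ minor inequalities, which explains why both reductions work. Your approach is the more elementary one and avoids the characteristic-polynomial machinery; the paper's approach has the minor advantage of giving the rank of $g$ (number of zero eigenvalues) for free via the vanishing of $Q-B$ and $A-t^2$, which it uses in the subsequent discussion of special cases.
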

\begin{proof}
$g\ge 0$ iff all eigenvalues $\lambda_j$ of $g$ are non-negative. The characteristic polynomial of Eq.~(\ref{eq:g}) is $p(\lambda)\defeq \det{(g-\lambda \1)}=512\lambda^3-64 C \lambda^2+8(A-t^2)\lambda-(Q-B)$, where $A \defeq q_1q_2+q_1q_3+q_2q_3$, $B \defeq q_1 t_1^2+q_2 t_2^2+q_3 t_3^2$, $C \defeq q_1+q_2+q_3$, $Q \defeq q_1q_2q_3$ and $t^2\defeq \mathbf{t}\cdot\mathbf{t}$. Using Descartes' rule of signs we can infer the maximal number of positive and negative roots by the number of sign changes of coefficients in $p(\lambda)$ and $p(-\lambda)$, respectively. Because $g$ is Hermitian we known that all roots of $p(\lambda)$ are real and we can in fact determine the exact number of negative and positive roots. First, for $g\ge 0$ all diagonal matrix elements $q_j$ must be non-negative, $q_j \ge 0$, which also implies that $C>0$ (as $\tr{g}=C/8$, $C=0$ would imply that $\cLdd\equiv 0$). The coefficient in front of $\lambda^3$ is positive and in order to have three non-negative roots one must also have $A \ge t^2$ and $Q\ge B$. $g$ has one zero eigenvalue if $Q=B$, if in addition $A=t^2$, it has two. It is not possible to have $A=t^2$ and $Q \neq B$ because this would imply less than $3$ positive roots. 

$Q -B\ge 0$, together with $q_j \ge 0$, implies that $Q \ge Q-q_2 t_2^2-q_3 t_3^2\ge q_1 t_1^2$, in turn leading to $q_2 q_3 \ge t_1^2$, provided that $q_1 \neq 0$. Two similar inequalities hold for other index combinations. Summing them together means that, if all $q_j>0$, $Q \ge B$ implies $A \ge t^2$. For $Q=0$, when one or two $q_j$ are zero, inequality $Q\ge B$ must be understood correctly in order to implicate $A\ge t^2$: if $q_1=0$, $Q\ge B$ directly implies that $t_{2,3}=0$ and then, dividing inequality by $q_1$, we get $q_2q_3\ge t_1^2$; if $q_{2,3}=0$ all $t_j=0$. These special cases can be accounted for by instead of $Q \ge B$ writing the inequality as $1 \ge \frac{t_1^2}{q_2 q_3}+\frac{t_2^2}{q_1 q_3}+\frac{t_3^2}{q_1 q_2}$ and understanding that, if any of the denominators is zero, the corresponding $t_j$ must also be zero. We have proved that if $g \ge 0$ then conditions (\ref{eq:unit},\ref{eq:con}) hold. In the opposite direction, if $q_j \ge 0$ and $Q \ge B$, we have seen that $A \ge t^2$ as well as $C>0$ and all the eigenvalues of $g$ are non-negative.
\end{proof}
We see that in addition to simple unital conditions (\ref{eq:unit}) only one additional inequality (\ref{eq:con}) is needed to characterize Lindblad dissipators. The situation is similar to the one for quantum channels~\cite{Ruskai:02} where also a single additional condition is required~\cite{JPA:14}.

{\em Important special cases of $\cLdd$.--} Let us have a look at cases when some $q_j$ are zero. There are only two possibilities that also exhaust all possible $\cLdd$ with one Lindblad operator (ie., $g$ of rank $1$):
\begin{enumerate}
\item Exactly one $q_j$ is zero, say $q_1=0$. As we have seen, this implies $t_2=t_3=0$ and $q_2 q_3\ge t_1^2$. One, or if also $q_2q_3=t_1^2$ two, eigenvalues of $g$ are zero. An example of such dissipator would be one with a single Lindblad operator $L=\sx+\ii \sy$.
\item Exactly two $q_j$ are zero, say $q_2=q_3=0$. In this case all $t_j$ must be zero and the dissipator is unital. Two eigenvalues of the GKS matrix are zero and, up to unitary rotations, we have single Lindblad operator $L \propto \sx$. 
\end{enumerate}
If all $q_j>0$ then one can use inequalities like $q_1 q_2 \ge t_3^2$ to show that one can not have $A=t^2$ (while $Q=B$, ie., $g$ of rank $2$, is still possible). The case (ii) is the only possibility of $\cLdd$ with two-times degenerate steady state.

\subsection{Geometry of one-qubit dissipators} 

Let us now compare the set of channels obtained from Lindblad dissipators to the set of all CPMs. Any one-qubit CPM $\Lambda$ can be brought to a ``diagonal'' form~\cite{Zyczkowski},
\begin{equation}
\Lambda=
\left( \begin{array}{cccc}
\mathbf{D} & \mathbf{v} \\
\mathbf{0} & 1 \\
\end{array} \right),\quad \mathbf{D}={\rm diag}(\lambda_1,\lambda_2,\lambda_3).
\label{eq:channel}
\end{equation}
We shall compare the two sets in terms of allowed $\lambda_j$ (Figs.~\ref{fig:unital} and \ref{fig:nonunital}) and in terms of $v_j$ (Appendix~\ref{app:B}). Conditions which $\lambda_j$ and $v_j$ have to satisfy for $\Lambda$ to be a CPM are well known~\cite{Ruskai:02}. In the unital case ($v_j=0$) $\lambda_j$ must lie within a tetrahedron defined by $4$ corners at $\lambda_j=\pm 1$ and $\lambda_1\lambda_2\lambda_3=1$, whereas for non-unital channels an additional inequality has to hold~\cite{JPA:14}. For diagonal Lindblad generator $\cLdd$ in Eq.(\ref{eq:Ldiag}) it is simple to obtain the corresponding Lindblad channel $\Lambda_\tau={\rm e}^{\cLdd\,\tau}$ (without loss of generality we set $\tau=1$) with parameters in Eq.(\ref{eq:channel}) being given by $q_1=\ln{\frac{\lambda_1}{\lambda_2\lambda_3}}$, $t_1=-v_1 \frac{\ln{\lambda_1}}{1-\lambda_1}$, and analogously for other components. For unital Lindblad channels the set of all $\cLdd$ in the space of $\lambda_j$ is shown in Fig.~\ref{fig:unital}.
\begin{figure}[!t]
\centering \includegraphics[width=2.2in]{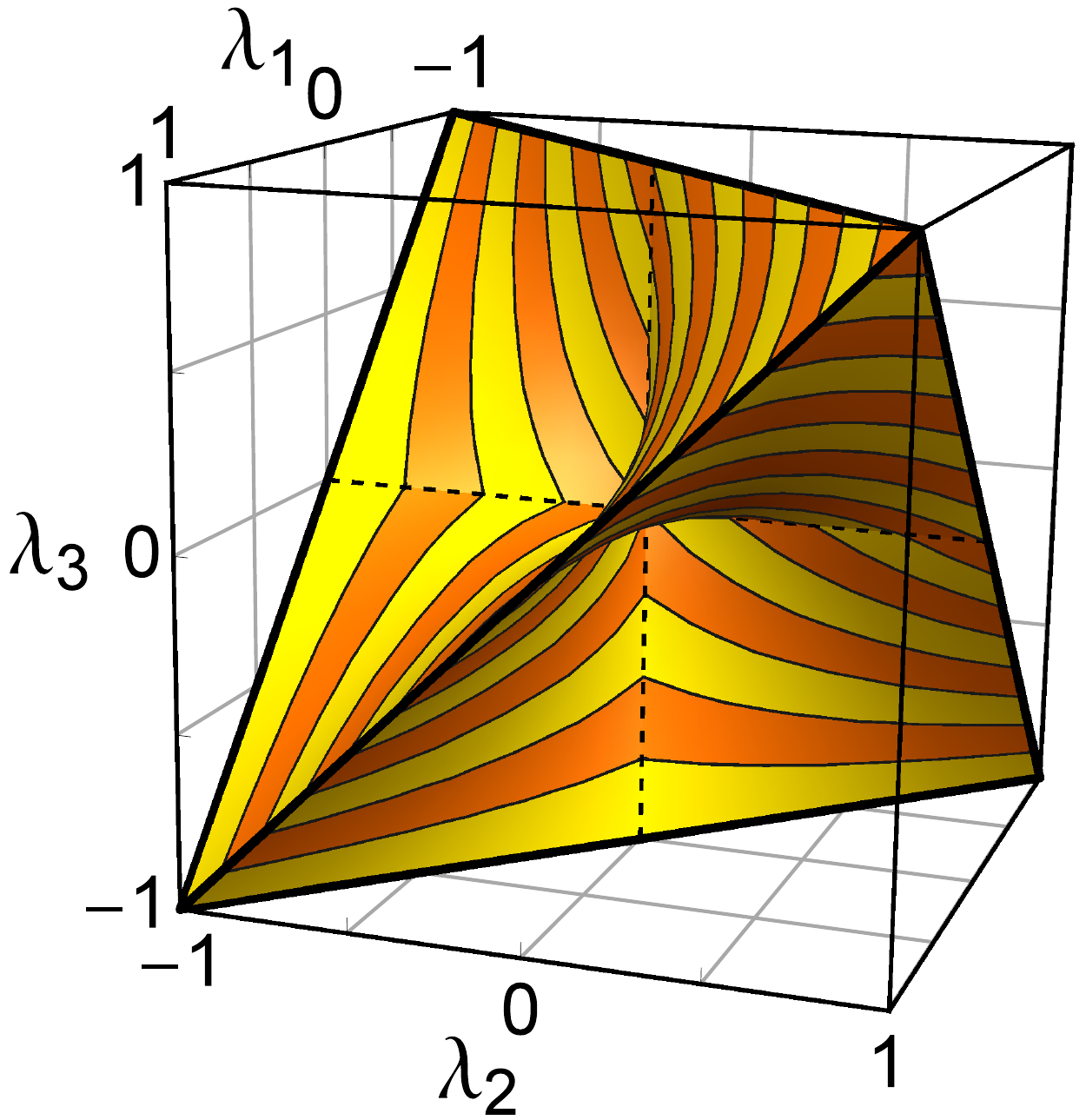}
\includegraphics[width=1.1in]{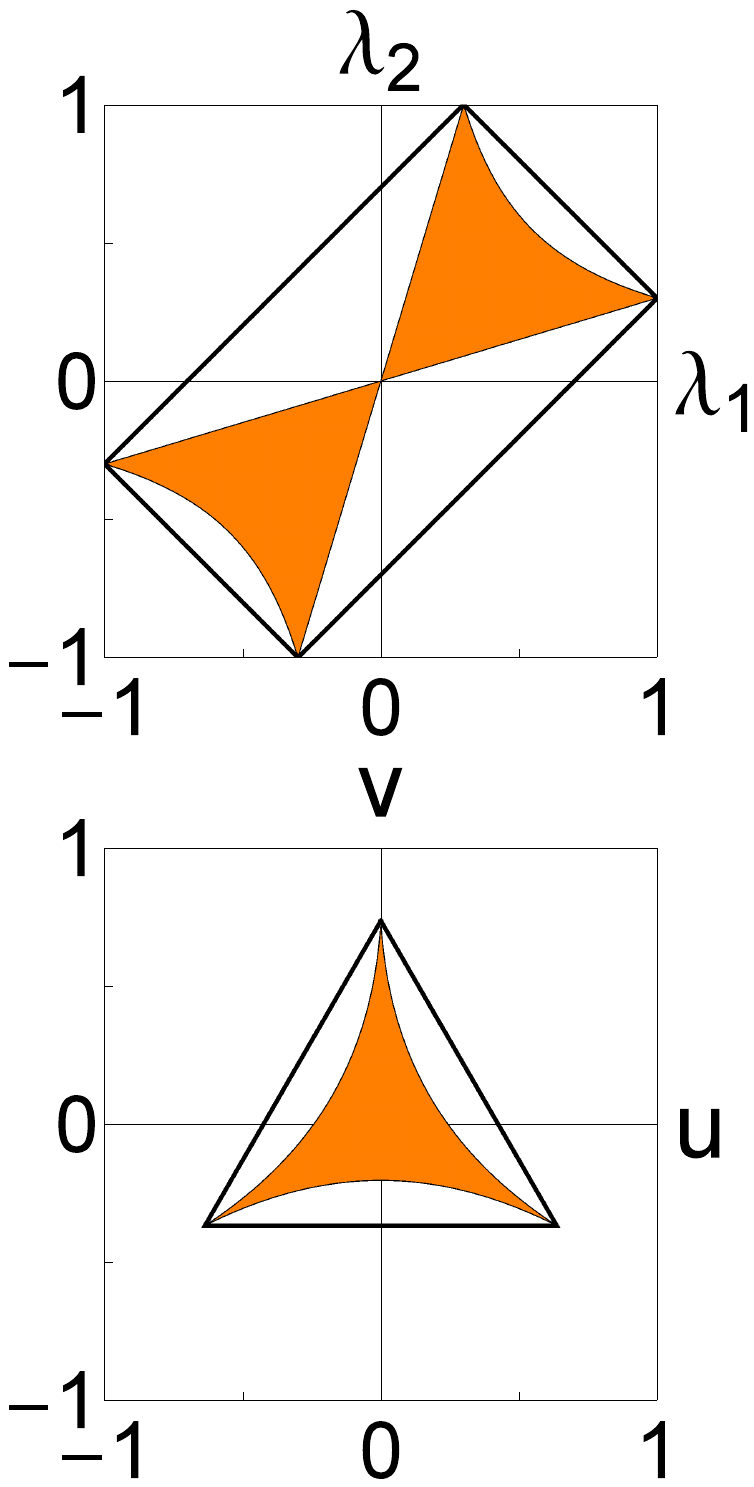}
\caption{(Color online) The set of $\lambda_j$ (\ref{eq:channel}) for one-qubit unital Lindbladian channels $\cLdd$ (colored object). Edges correspond to dissipators $\cLdd$ composed of one Lindblad operator, the surface of two, and the interior from three Lindblad operators. Stripes at the surface, aiding in visualization, are isolines of planes perpendicular to vector $(1,1,1)$. Right top: crosssection at $\lambda_3=0.3$; full rectangle is the CPM' tetrahedron boundary. Right bottom: crosssection perpendicular to $(1,1,1)$ containing point $0.4(1,1,1)$; full triangle is the tetrahedron's edge.}
\label{fig:unital}
\end{figure}
Because $q_j\ge 0$ implies $\lambda_j \ge 0$, in the space of $\lambda_j$ the channel $\Lambda_1$ is always in the first octant. By trivially concatenating such Lindblad channel by a rotation by $\pi$ around one of the three axes one can change the sign of two $\lambda_j$, obtaining symmetrical objects shown in Figs.~\ref{fig:unital} and~\ref{fig:nonunital}. The set of unital Lindblad channels is bounded by hyperbolic paraboloid surfaces like $\lambda_1 \ge \lambda_2 \lambda_3$ due to $q_j\ge 0$, and is of course smaller than the set of all CPMs (tetrahedron). In the 1st octant Lindblad channels fill $\frac{1}{2}$ of the volume of all CPMs, in the whole tetrahedron this fraction is $\frac{3}{8}$. For non-unital channels there are three additional ``shift'' parameters $v_j$. At fixed $v_j$ the set is a pinched rounded tetrahedron, the rounding being essentially determined by Eq.(\ref{eq:con}), pinching by Eq.(\ref{eq:unit}), with logarithms in relations between $q_j$ and $\lambda_j$ adding additional complexity. The set of such non-unital Lindblad channels, together with a ``rounded'' tetrahedron set of non-unital CPMs (see Ref.~\cite{JPA:14} for equations), is shown in Fig.~\ref{fig:nonunital}.
\begin{figure}[!t]
\centering \includegraphics[width=2.2in]{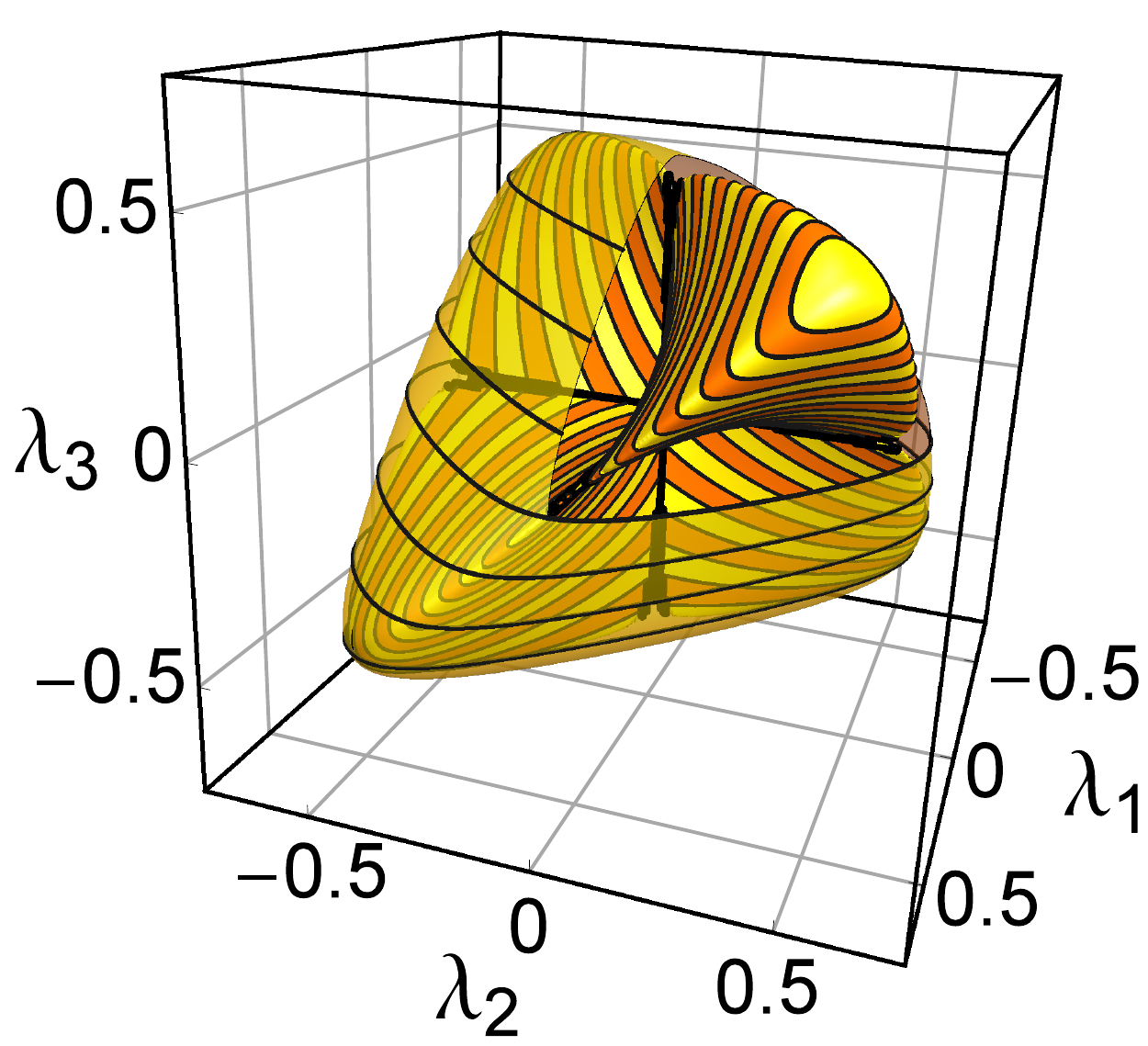}
\includegraphics[width=1.1in]{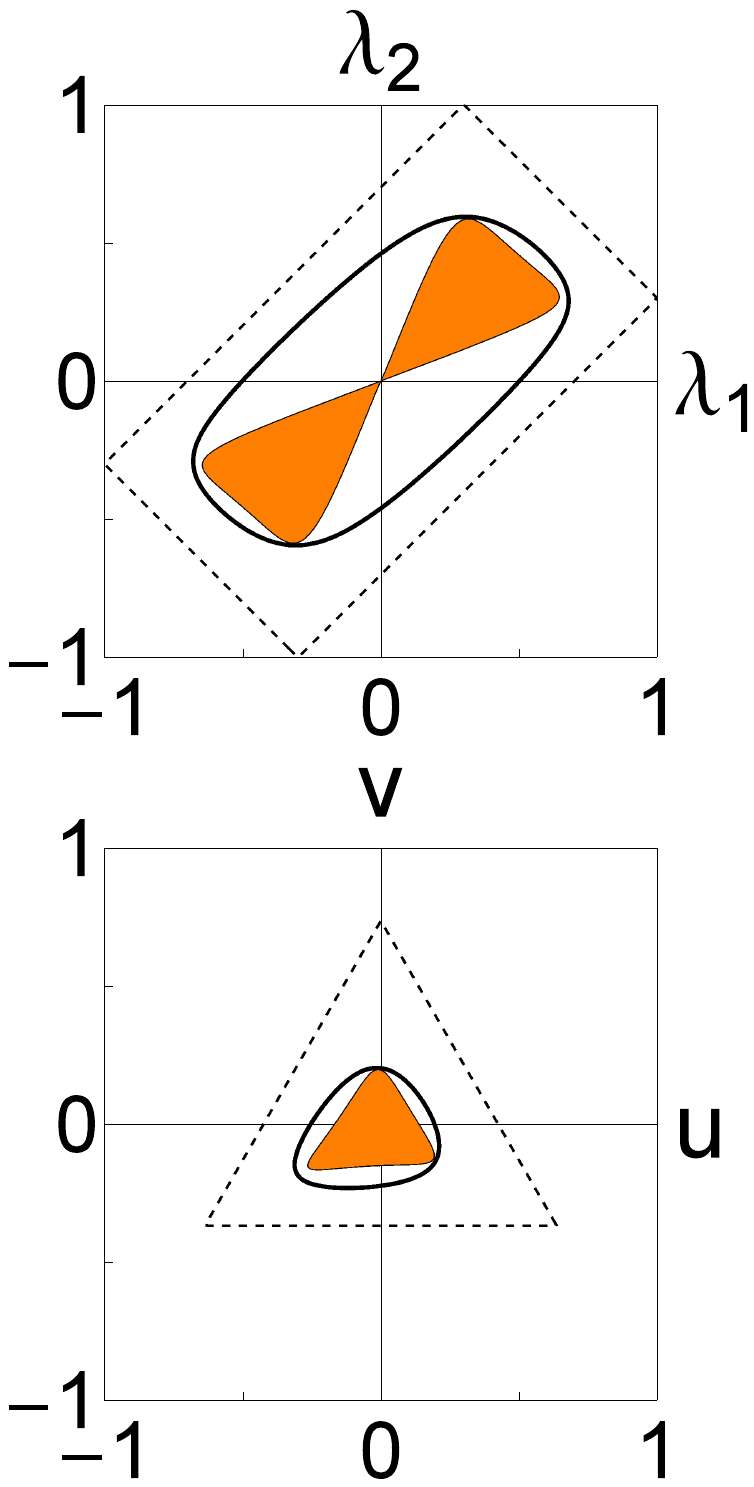}
\caption{(Color online) The sets of one-qubit non-unital Lindblad channels (solid object) and all CPMs (transparent outer shell), both for fixed $\mathbf{v}=\frac{1}{2}(\sin{\theta}\cos{\phi},\sin{\theta}\sin{\phi},\cos{\theta})$ with $\theta=\pi/4$ and $\phi=\pi/3$. The surface corresponds to dissipators made from $2$, the interior from $3$ Lindblad operators. Right: same crossections as in Fig.~\ref{fig:unital}. Dotted lines are tetrahedron edges (unital CPMs), full curves are crossection with the non-unital CPM's surface (transparent shell), colored region is the Lindblad set.}
\label{fig:nonunital}
\end{figure}
One could also plot the two sets for fixed $q_j$ instead of for fixed $v_j$, getting an ellipsoid for the Lindblad case, Eq.(\ref{eq:con}), for details see Appendix~\ref{app:B}. It is interesting to note that various tetrahedron-like geometric objects often pop up when studying different quantum information objects, e.g.~\cite{Zyczkowski,Ruskai:02,tetra}, a common denominator being some form of positivity constraint.

\section{Application}
 
The singlet fraction of a 2-qubit steady state $\rho_\infty$, $\cL(\rho_\infty)=0$, is defined as $F\defeq \langle \psi \vline\, \rho_\infty\, \vline \psi \rangle$, where $\ket{\psi}=(\ket{01}-\ket{10})/\sqrt{2}$. States with high singlet fraction can be used for quantum processing, for instance teleportation~\cite{Horodecki:99}. We want to find the maximal singlet fraction $F$ for a 2-qubit system, maximized over all possible 2-qubit $H$ and 1-qubit dissipators $\cLd$ (this is different than in~\cite{Sauer:13} where the set is studied for a fixed dissipation). The following simple theorem, holding for any number of qubits $n$, will be of help. If $\rho_\infty=\frac{1}{2^n}\1+\sum_j c_j F_j$ is a steady state of Lindblad equation $\cL$ with dissipator $\cLd$ and unitary part given by $H$, acting as $\mathbf{M}\mathbf{c}+\mathbf{\tilde{t}}$, then $\rho'_\infty\defeq \frac{1}{2^n}\1+k\sum_j c_j F_j$ is a steady state of Lindblad equation $\cL'$ having the same $H$ and $\mathbf{M}$ as $\cL$, while the shift in the dissipator $\cLd'$ is $\mathbf{\tilde{t}}'=k\, \mathbf{\tilde{t}}$ (provided, of course, that $\cLd'$ is also a valid Lindblad dissipator). This can be proved by simply verifying that $\cL'(\rho'_\infty)=0$. The length of steady-state coherences, i.e., its purity, can be increased by increasing $\mathbf{\tilde{t}}$. An important consequence is that for fixed $\mathbf{M}$ the largest steady-state overlap with any given state is reached at the maximal possible shift $|\mathbf{\tilde{t}}|$. For 1-qubit $\cLd$ the maximal $F$ is therefore reached either for $\mathbf{\tilde{t}}\equiv 0$ (unital dissipator) or for $g$ of rank $\le 2$.

If $\cLd$ is unital the stationary state coherences $\mathbf{c}$ satisfy $(\mathbf{M}+\mathbf{N})\mathbf{c}=0$. The steady-state subspace is therefore spanned by $\1$ as well as optional pure states corresponding to nontrivial solutions of $(\mathbf{M}+\mathbf{N})\mathbf{c}=0$. It is well known~\cite{Zanardi:98,Yamamoto:05,Kraus:08} that pure state $\ket{\chi}$ can be a steady state of Lindblad equation iff it is an eigenvector of all $L_j$ as well as of $H-\ii \sum_j L_j^\dagger L_j$. Using Schmidt decomposition $\ket{\chi}=\sum_k \mu_k \ket{\alpha_k}_{\rm A}\ket{\beta_k}_{\rm B}$ we see that $L_j \ket{\alpha_k}_{\rm A}=\delta_j \ket{\alpha_k}_{\rm A}$ must hold for all $k$. If the rank of $\ket{\chi}$ is maximal (for a given bipartition such that $L_j$ acts on $A$) $L_j$ must be an identity operator and therefore $\cLd$ is zero. In particular, for a 1-qubit dissipator pure steady state is always separable and therefore $F$ can be at most $1/2$. This maximum can be reached only if there is a single Lindblad operator (two different Pauli matrices do not have a common eigenvector), e.g. $L \sim \sx$, and therefore two $q_j$ are zero. On the other hand, for nonzero shift length $t \defeq | \mathbf{t}|$ at most one $q_j$ can be zero. As a consequence, in the limit of small nonzero shift $t \to \epsilon$ the optimal singlet fraction can be shown to be $1/4$. There is a discontinuous transition in the maximal $F$ from $1/2$ to $1/4$ as one smoothly moves from unital ($t=0$) to non-unital dissipators ($t\neq 0$).   

The singlet fraction $F$ is invariant to any local rotation $U\otimes U$ ($U$ is a 1-qubit unitary) and we can always rotate dissipator to a basis in which the shift vector has only one non-zero component, say $t_1 \neq 0, t_{2,3}=0$. Due to symmetry reasons, in the optimal case, we expect $\cLdd$ in this basis to have $q_2=q_3$ and possibly different $q_1$. Because $t$ also has to be maximal we have in addition $q_2 q_3=t_1^2=1$ (because $H$ is arbitrary we can set $t=1$ without loss of generality). One can argue (Appendix~\ref{app:A}) that for such $\cLdd$ (acting on the 1st qubit) the optimal steady state will be of form $\rho_\infty=\frac{1}{4}\1_1\1_2+c_1 \sx_1\sx_2+c_4 \1_1\sx_2+c_6 \sy_1\sy_2+c_{11} \sz_1\sz_2+c_{13} \sx_1\1_2$. One could try to maximize $F$ subject to necessary conditions~\cite{Sauer:13} $\tr{\rho^r_\infty \cLdd(\rho_\infty)}=0$, $r=1,2,3$, we found it though more convenient (Appendix~\ref{app:A}) to use an alternative approach. Operator equation $\cL(\rho_\infty)=0$ represents a set of $15$ equations that are linear in parameters $d_j$ of $H=\sum_j d_j F_j$ as well as in $c_j$. They can be written as a matrix equation $\mathbf{G}\mathbf{d}=\mathbf{f}$, where the matrix $\mathbf{G}$ as well as the inhomogeneous part $\mathbf{f}$ depend on $\mathbf{c}$ (but not on $\mathbf{d}$). The equation has a solution for $\mathbf{d}$ only if $\mathbf{f}$ is orthogonal to the kernel of $\mathbf{G}$. The kernel of $\mathbf{G}$ can be explicitly calculated, resulting in three constraints (Appendix~\ref{app:A}). The fidelity $F=\frac{1}{4}+c_1+c_6+c_{11}$, subject to these constraints, can be analytically maximized. For fixed $q_{2,3}$ the optimum is always at $q_1=0$, i.e., for single Lindblad operator. The absolute maximum is at $q_2=q_3=1$ when $F_{\rm max}=(1+\varphi)/4\approx 0.6545$, where $\varphi=(1+\sqrt{5})/2$ is the golden mean. Steady state $\rho_\infty$ in this optimal case is of rank $4$ and is entangled. Such optimal $\rho_\infty$ can not be obtained exactly at $q_2=1$, but only in the limit of $q_2 \to 1$, which, however, poses no serious obstacles, see Appendix~\ref{app:A}. As we see, local non-unital dissipation can create mixed entangled steady-states, even though it can not create pure entangled steady-states. Without the locality constraint considered in this work it is, of course, always possible to mitigate entanglement destruction by specific dissipation, like for instance when a common environment is coupled to non-interacting systems~\cite{Braun:02}, or for high-temperature entanglement in driven~\cite{Galve:10} or steady states~\cite{NESS}. 

\section{Consequences} 

The fact that maximal singlet fraction $F_{\rm max}$ is bounded away from $1$ has some important implications. For Hamiltonians whose non-degenerate ground-state has an overlap with a maximally entangled state larger than $F_{\rm max}$, using only local dissipation one can not ``cool'' the systems down to its ground state. For such $H$ there is a fundamental limit to local quantum cooling -- a minimal temperature below which one can not cool. Note that in classical setting, using local cooling by e.g. Langevin bath, there are no obstacles to the lowest attainable energy. In the setting studied this improves on various unattainability results for certain specific situations~\cite{GScooling}. Another interesting consequence is that the steady-state $F$ can put a constraint on a possible dissipation, even if nothing is known about $H$ or $\cL$, namely, if $F$ is larger than $F_{\rm max}$, we immediately know that dissipation can not be local. Also, the maximal $F$ is very sensitive to unitality. For strictly unital local dissipation it is $1/2$, whereas for infinitesimally weak violation of unitality it drops to $1/4$. We expect similar results about lowest attainable energy to hold also for more than 2-qubit systems provided dissipation acts locally on only part of a system. The reason is that local dissipation has only a limited influence on non-local quantum correlations. As a simple example, for an $n$-qubit Heisenberg chain and 1-qubit dissipation one can show that a non-degenerate steady state is always separable and therefore there is again a temperature ``gap''.

\section{Conclusion} 

We have efficiently geometrically characterized single-qubit Lindblad dissipators, opening the door for various optimizations involving local dissipation. To demonstrate its applicability we have calculated the maximal singlet fraction achievable by one-qubit dissipation, showing that it is less than $1$, in turn implying a fundamental limit to local quantum cooling.

I would like to thank B.~\v Zunkovi\v c and H.~C.~F.~Lemos for discussions and acknowledge grant P1-0044 from the Slovenian Research Agency.

\appendix
\section{Maximal singlet fraction for GKS matrix with $t_2=t_3=0$}
\label{app:A}

We want to find out the maximal singlet fraction $F$ for $\cLdd$ with $q_2q_3=1$, $t_1=1$ and $t_2=t_3=0$, for which there are two Lindblad operators, $L_1=(\sqrt{q_2}\sy_1+\ii\,\sz_1/\sqrt{q_2})/\sqrt{8}$ and $L_2=\sqrt{\frac{q_1}{8}}\sx_1$. A necessary condition for $\rho$ to be a steady state of Lindblad equation is that $C_r \defeq \tr{\rho^r_\infty \cLdd(\rho_\infty)}$ are zero for $r=1,2,3$~\cite{Sauer:13}. Let us parametrize $\rho$ as $\rho_\infty=\frac{1}{4}\1_1\1_2-\sum_{j,k=0}^3 c_m \sigma^j_1 \sigma^k_2$, where $m=j+4k+1$ and $\sigma^{j=0,\ldots 3}$ denotes $\{ \sx,\sy,\sz,\1\}$, respectively. For such parametrization $F$ is
\begin{equation}
F=\frac{1}{4}+c_1+c_6+c_{11},
\end{equation}
i.e., it is given by a sum of coefficients in front of $\sx_1\sx_2, \sy_1\sy_2$, and $\sz_1\sz_2$. Conditions $C_r=0$ result in equations that are of $(r+1)$-th order in unknown coefficients $c_m$. One could use the method of Langrange multipliers to solve this constrained optimization problem. However, besides practical solvability issues, there is a fundamental difficulty that the domain of allowed $c$'s is not bounded with $C_{1,2,3}=0$ (the reason is that $\cLdd$ acts only on a single qubit). As a consequence, for instance, solving the resulting Euler-Lagrange equations for optimization of $F$ subject to only $C_1=0$, gives a solution for which $F=\frac{1}{2}$, which, as we shall see, is not the correct maximum. To make the domain bounded one could add an additional constraint, for instance $I\defeq \tr{\rho_\infty^2} \le 1$. One difficulty with using only $C_r=0$ though would still remain. $C_r=0$ is necessary and sufficient if the steady state $\rho$ has non-degenerate spectrum, but only necessary otherwise. As it turns out, the optimal $\rho_\infty$ has in our case a degenerate spectrum (one eigenvalue is twice degenerate). We therefore use a slightly different approach, where though conditions $C_r=0$ will still be used to first infer that a number of coefficients $c_m$ are zero in the optimal case.

Maximizing $F$ subject to $C_1=0$ as well as $I=(\frac{1}{4}+4\sum_m c_m^2) \le 1$ we have a quadratic maximization problem that can be solved exactly. First, one can observe that in $C_1$ some coefficients come only in perfect squares, e.g., $2(q_3+q_1)(c_2^2+c_{10}^2+c_{14}^2+c_{6}^2)+2(q_2+q_1)(c_3^2+c_7^2+c_{15}^2+c_{11}^2)$. Therefore, if we have a solution with nonzero $c_{2,10,14,3,7,15}$ it is always better, meaning we will have higher $F$, to set them to zero and instead increase $c_{6,11}$. In the maximum we will always have $c_{2,3,7,10,14,15}=0$. Then, using Lagrange multipliers one can also show that $c_{5,8,9,12}$ must as well be zero: one has a homogeneous set of linear equations for these coefficients with the only solution being $c_{5,8,9,12}=0$, unless a Langrange multiplier takes a special value in which case though equations for $c_{1,4}$ do not have a solution. Therefore, in the optimal situation only five $c_{1,4,6,11,13}$ are nonzero. The fact that only $c_{1,4,6,11,13}$ appear in the steady state with optimal singlet fraction is actually not very surprising. $F$ depends on $c_{1,6,11}$ so these coefficients will likely appear in the optimal $\rho$. Then, dissipator $\cLdd$ couples $\1_1\1_2$ to $\sx_1\1_2$ as well as $\1_1\sx_2$ to $\sx_1\sx_2$. Coefficients $c_{1,4,6,11,13}$ therefore represent in a way a ``minimal'' set that can satisfy all constraints. Incorporating constraints $C_2$ and $C_3$ into analytical argument is harder so we rather show results of numerical optimization. In Fig.~\ref{fig:normA} we show the dependence of optimal $F$ on the norm $\| c_{\rm A}\|^2=4\sum_{m \in \{2,3,5,7,8,9,10,12,14,15\}} c_m^2$. We can see that the maximum is reached for $\| c_{\rm A}\|=0$, as it was already the case for the single constraint $C_1=0$. Adding constraints $C_2=0$ and $C_3=0$, as well as having $q_2 \neq q_3$ and $q_1 \neq 0$, therefore does not change the conclusion that in the optimum only $c_{1,4,6,11,13}$ are nonzero. Observe also from Fig.~\ref{fig:normA} that adding condition $C_3=0$ to $C_{1,2}=0$ adds very little, for instance, we can analytically show that for $\| c_{\rm A}\|=0$ one gets $F_{\rm opt}\approx 0.65496$ when $C_{1,2}=0, I\le 1$, while $F_{\rm opt} \approx 0.65451$ when $C_{1,2,3}=0, I\le 1$. 
\begin{figure}[!ht]
\includegraphics[width=3in]{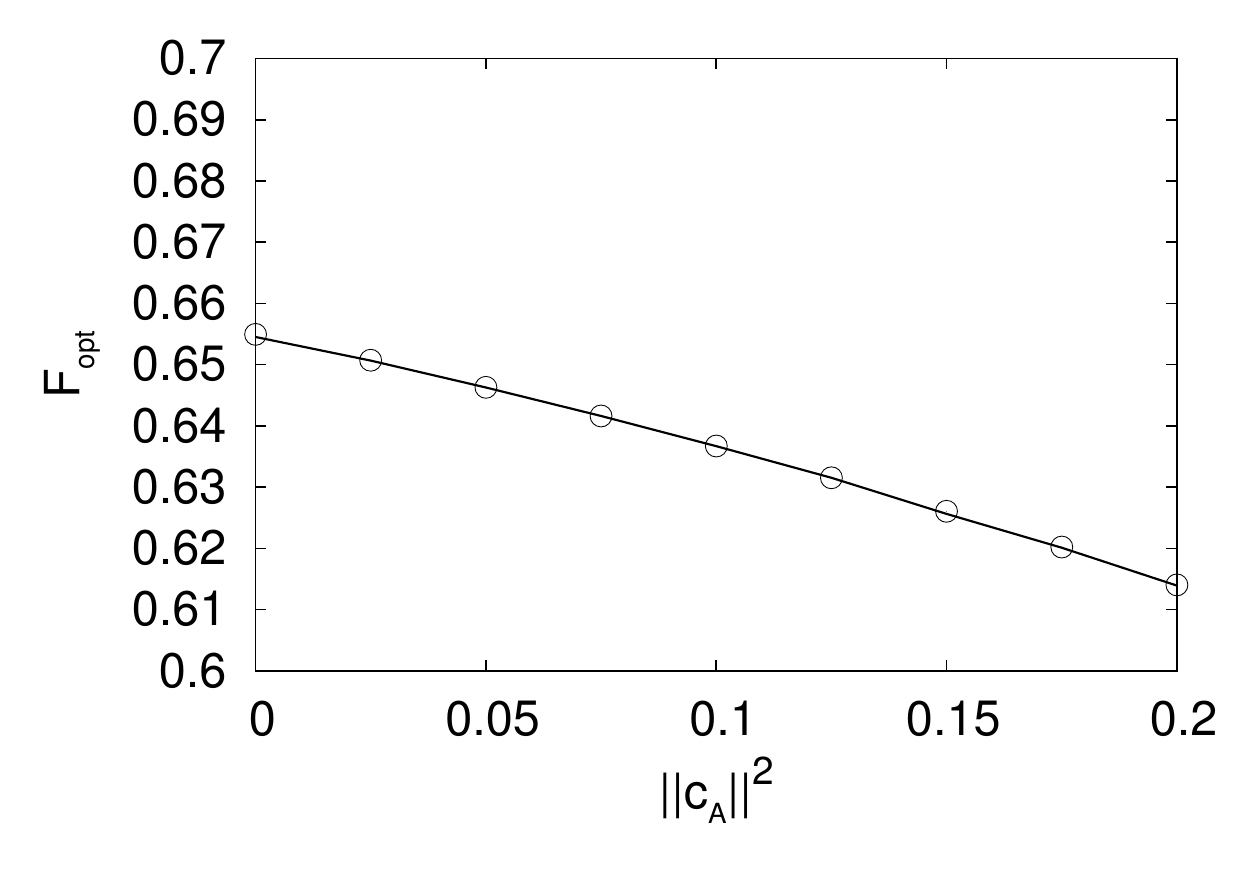}
\caption{Optimal singlet fraction $F$ for different fixed norm $\|c_{\rm A}\|^2$. Circles show the case with constraints $C_1=0, C_2=0, I\le 1$, and the full curve (almost overlapping with circles) shows the case for $C_1=0, C_2=0, C_3=0, I\le 1$; $q_2=q_3=1$, $q_1=0$.}
\label{fig:normA}
\end{figure}

We are therefore left with $5$ nonzero coefficients $c_{1,4,6,11,13}$. Stationary Lindblad equation $\cL(\rho_\infty)=0$ can be written as a matrix equation $\mathbf{G}(\mathbf{c})\mathbf{d}=\mathbf{f}(\mathbf{c})$, for unknown parameters $\mathbf{d}$ of the Hamitonian $H=\sum_j d_j F_j$. The equation has a solution provided $\mathbf{f}(\mathbf{c})$ is orthogonal to the kernel of $\mathbf{G}(\mathbf{c})$. These give sufficient and necessary conditions on coefficients $\mathbf{c}$ in order that the corresponding $\rho_\infty$ is a steady state. Because we have only $5$ remaining unknown $c$'s $\mathbf{G}$ is sufficiently simple so that its kernel, being in general of size 3, can be analytically calculated. First, for fixed $q_{2,3}$ in optimum one always has $q_1=0$, ie. rank $1$ GKS matrix with one Lindblad operator. The three kernel conditions are in this case $q_{23} c_1-4c_4=0$, $1+2q_{23}c_{13}+D(c_6^2-c_{11}^2)=0$ and $q_2 c_{11}+D(c_4 c_{6}+c_{11}c_{13})=0$, with $D \defeq q_3 c_6/(c_4 c_{11}+c_6 c_{13})$ and $q_{23}\defeq q_2+q_3$. These are now sufficiently simple so that $F$ can be analytically maximized. We also note that kernel conditions are stronger than $C_{1,2,3}=0$. The dependence of $F$ on $q_2$ is shown in Fig.~\ref{fig:Fmax}.
\begin{figure}[!h]
\includegraphics[width=2.9in]{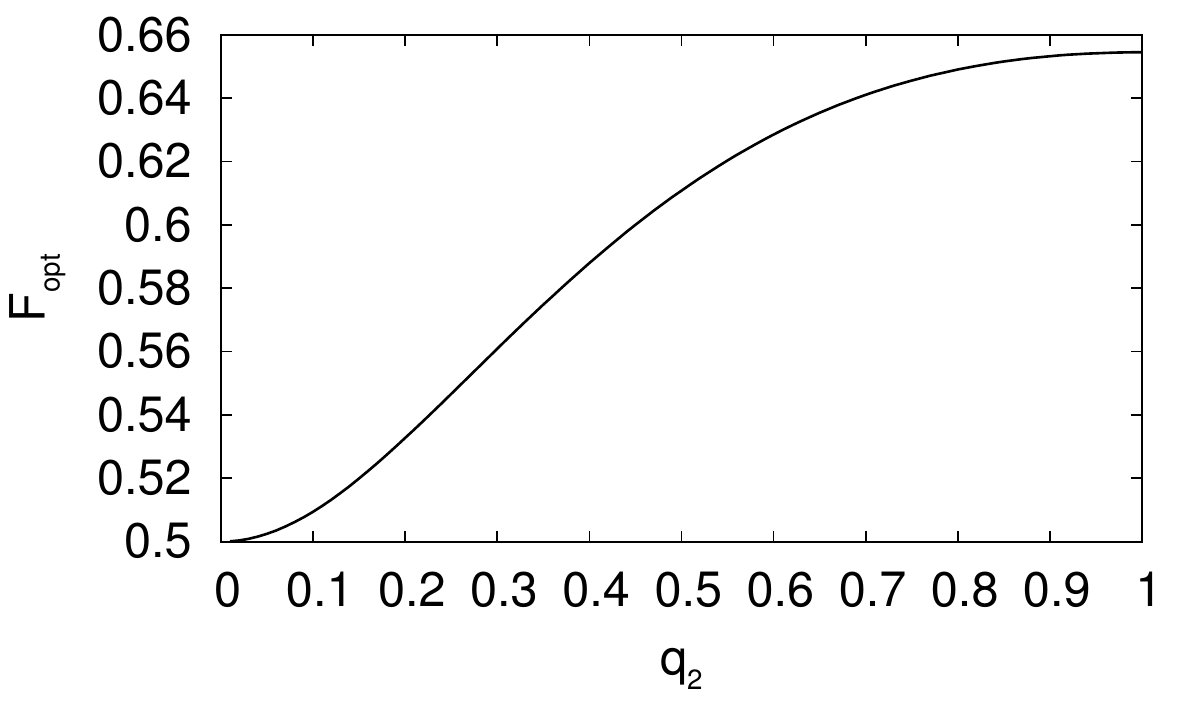}
\caption{The optimal singlet fraction of the steady state for 1-qubit $\cLdd$ with $t_2=t_3=q_1=0$, $t_1=1$, $q_3=1/q_2$.}
\label{fig:Fmax}
\end{figure}
Not very surprisingly, the maximum is achieved for $q_2=q_3=1$. Only at that point is $\cLdd$ invariant to rotations about the $x$-axis, similarly as is the singlet state (for $q_2\neq 1$ the symmetry between $y$ and $z$ is lost). The value of the maximal singlet fraction is $F_{\rm max}=\frac{3+\sqrt{5}}{8}$, reached at $q_2=q_3=1$, while the values of coefficients are $c_1=c_4=-c_{13}=(5+\sqrt{5})/40$, $c_6=c_{11}=1/(4\sqrt{5})$. The optimal point of $q_2=q_3=1$ for which $F=F_{\rm max}$ is in fact degenerate: kernel of $\mathbf{G}$ is in this special case larger and $\mathbf{G}(\mathbf{c})\mathbf{d}=\mathbf{f}(\mathbf{c})$ has no solutions. Optimality can be reached only in the limit $q_2 \to 1$, which, because dependence of $F$ close to $q_2=1$ is quadratic (see Fig.~\ref{fig:Fmax}) poses no serious obstacle. One possibility is to take $c_1=\frac{1}{2}q_2^2(1+2q_2^2/\kappa)/(1+q_2^2)^2, c_4=\frac{1}{4}q_2(1+2q_2/\kappa)/(1+q_2^2), c_6=\frac{1}{4}q_2^2(1+q_2^2)/\kappa, c_{11}=\frac{1}{4}(1+q_2^2)/\kappa, c_{13}=-c_4$, and $H=(\kappa-2q_2^2)/(4(1-q_2^2)(1+q_2^2)^2)[ q_2^2 \sz_1\sy_2+\sy_1\sz_2]$, where $\kappa \defeq \sqrt{1+4q_2^2+10q_2^4+4q_2^6+q_2^8}$, resulting in a singlet fraction $F=q_2^2/(4q_2^2+q_2^4+1-\kappa)$ (which is not the optimal one for $q_2\neq 1$, but approaches the one for $q_2 \to 1$). Taking $q_2=1-\epsilon$, in the limit $\epsilon \to 0$, the expressions simplify to $H=\frac{\sqrt{5}-1}{16\epsilon}(\sz_1\sy_2+\sy_1\sz_2)$, and $c_1=(5+\sqrt{5})/40(1-\epsilon^2/2)$, $c_4=(5+\sqrt{5})/40$, $c_6=(1-\epsilon)/(4\sqrt{5})$, $c_{11}=(1+\epsilon)/(4\sqrt{5})$, $c_{13}=-c_4$, all written to the lowest order in $\epsilon$.

\section{Comparing Lindblad and quantum channels}
\label{app:B}

In the main text we have compared the set of single-qubit Lindblad channels to the set of general qubit channels for the unital case in Fig.~1, and for a fixed shift vector $\mathbf{v}$, Eq.~(6) in main text, in Fig.~2. It is instructive to compare the two sets in a non-unital case also for channel's fixed generalized singular values $\boldsymbol{\lambda}=(\lambda_1,\lambda_2,\lambda_3)$, Eq.~(6).

\begin{figure}[!t]
\includegraphics[width=2.3in]{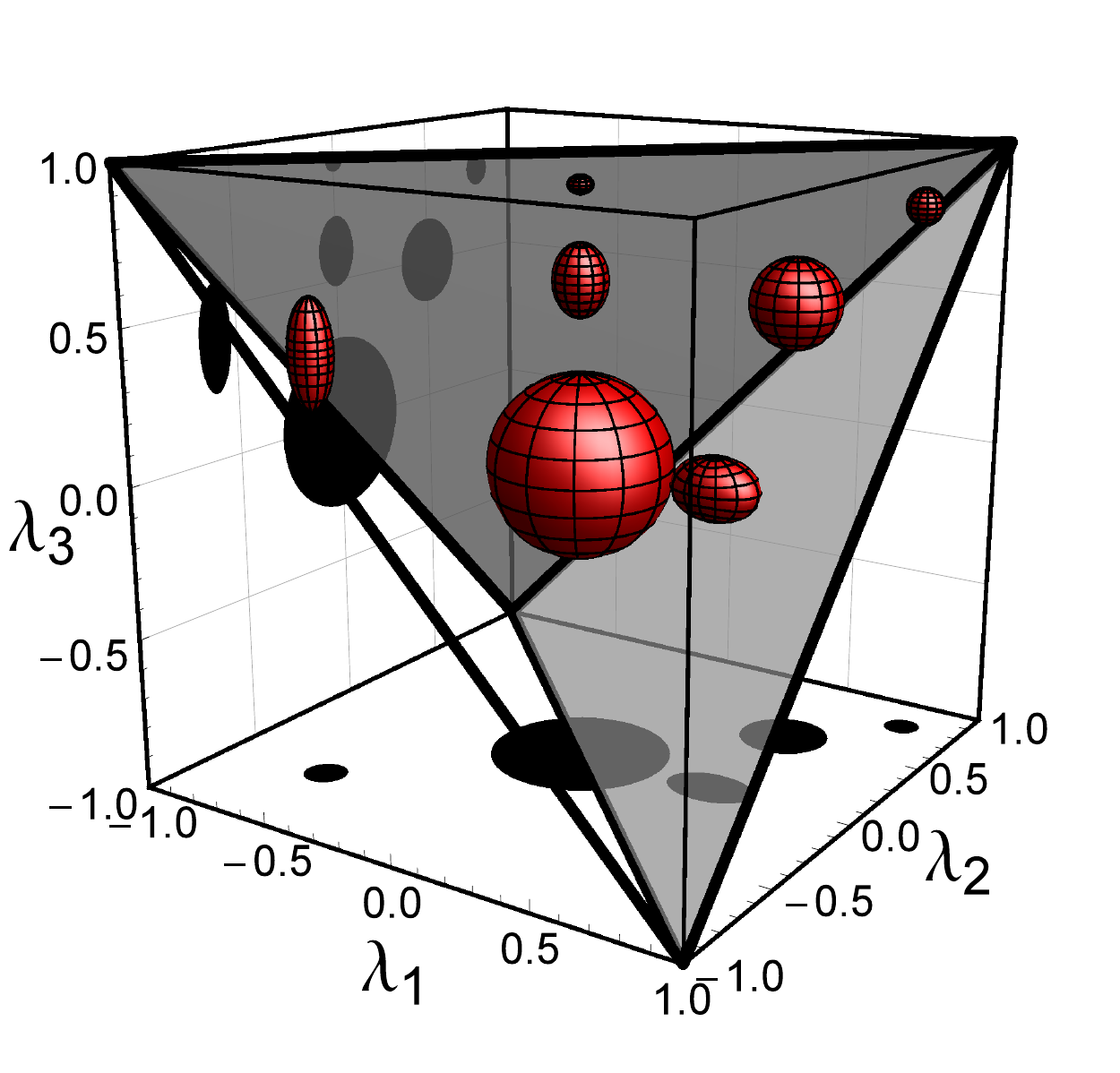}\\
\includegraphics[width=2.3in]{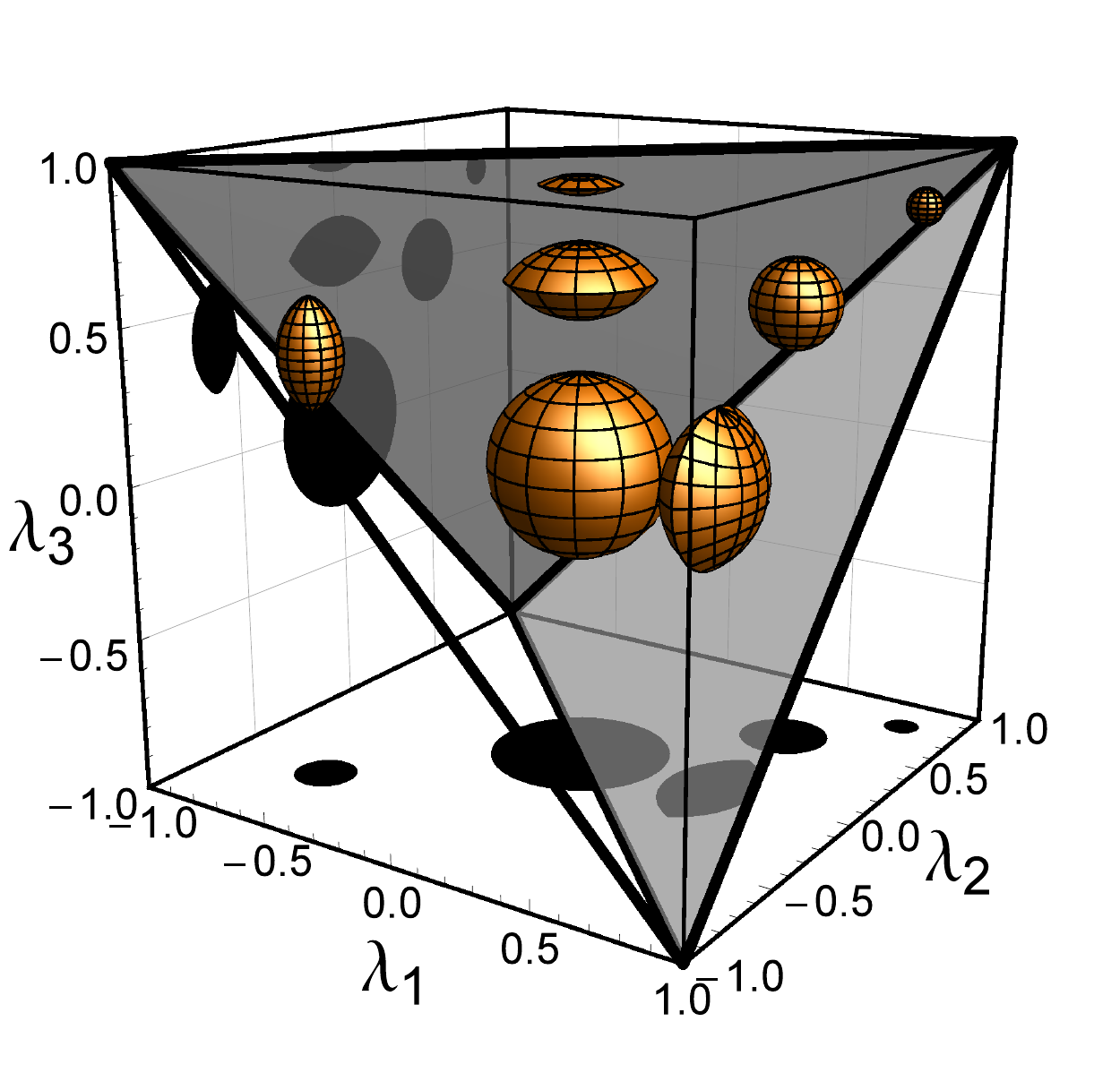}
\caption{(Color online) The sets of allowed channel shifts $\mathbf{v}$ [Eq.~(6)] (scaled by factor $0.3$) for $7$ different diagonal values $\boldsymbol{\lambda}$, namely for $(0,0,0)$, $(0,0,0.6)$, $(0,0,0.9)$, $(0.5,0.5,0.5)$, $(0.8,0.8,0.8)$, $(0.5,0,0)$ and $(-0.6,-0.6,0.4)$. Top: Lindblad channels. Bottom: CPMs. What is shown is a polar plot: maximal $\mathbf{v}$ in a given direction, centered around fixed $\boldsymbol{\lambda}$. In the Lindblad case the surface of ``balls'' corresponds to dissipators with $2$ Lindblad operators, and the interior to $3$. On the diagonal $\boldsymbol{\lambda}\propto (1,1,1)$ the set of allowed $\mathbf{v}$ is the same for CPMs and Lindblad channels (perfect spheres), everywhere else the Lindblad set is smaller.}
\label{fig:balls}
\end{figure}
These sets are shown in Fig.~\ref{fig:balls}. What is shown is a polar plot of the maximal allowed length of the shift vector $\mathbf{v}$, such that the resulting map is completely positive or can be generated by Lindblad evolution, Eq.~(5). Maximal allowed length of $\mathbf{v}$ in a given direction is therefore equal (upto a scale) to the distance between the shown surface and its center, i.e., $\boldsymbol{\lambda}$, around which the ``ball'' is plotted. For Lindblad channels, in the space of generator shifts $\mathbf{t}$, this set would be an ellipsoid, Eq.~(5). In the space of channel shifts $\mathbf{v}$ it is a smooth ellipsoid-like shape seen in the top of Fig.~\ref{fig:balls}. For general channels (CPMs) the set, which is larger or equal to the Lindblad one, can be seen in the bottom of Fig.~\ref{fig:balls}. The boundary of this CPM set is determined by Eq.(30) from Ref.~\cite{JPA:14} and can have non-smooth edges.

\end{document}